\documentclass[12pt]{article}
\usepackage[paperwidth=8.5in,paperheight=11in,
            left=1.25in,top=1in,
            textwidth=6in,textheight=9in]{geometry}
\usepackage{amsmath,amssymb,amsthm}
\usepackage{tikz}
\usetikzlibrary{positioning,arrows.meta}
\usepackage{booktabs}
\usepackage{multirow}
\usepackage[colorlinks=true,linkcolor=blue,citecolor=blue,urlcolor=blue]{hyperref}

\newtheorem{theorem}{Theorem}
\newtheorem{lemma}{Lemma}
\newtheorem{corollary}{Corollary}

\newcommand{\E}{\mathbb{E}}
\newcommand{\eff}{\mathrm{eff}}

\title{\bfseries Optimal Weighting of Training Data in Adaptive Coding}
\author{Yuriy A. Reznik\\[2pt]
\normalsize Massachusetts Institute of Technology, Cambridge, MA, USA\\
\normalsize \texttt{yreznik@mit.edu}}
\date{}

\begin{document}
\maketitle

\begin{abstract}
Adaptive coders are commonly primed with training data, turned into
context tables or shipped dictionaries. The training data generally
do not match the source being encoded, and the coder must decide how
much to trust them. We make
that trust a design variable: a Krichevsky--Trofimov (KT) estimator
over an~$m$-ary alphabet whose training counts are scaled
by~$\xi\in[0,1]$. For a training sequence of length~$\ell$ at
Kullback--Leibler divergence~$D$ nats per symbol from the message
source, the redundancy-minimizing weight is $\xi^*=d/(2\ell D+d)$,
$d=m-1$. The effective training length $\xi^*\ell$ follows a harmonic
law: any mismatch caps the usable training information at~$d/(2D)$
symbols, however much was collected.
Three implementable selectors supply the unknown~$D$: offline, from
the spread of the training set; by a plug-in loop, from the decoded
prefix; and by a twice-universal mixture, with no estimation at all.
On the ten English texts of the Calgary and Canterbury corpora,
weighting removes up to~$22.7\%$ of the redundancy of the classical
KT code, the $\xi=0$ endpoint, and beats both endpoints on every
file. An open-source implementation is provided.
\end{abstract}

\section{Introduction}

Adaptive coders are rarely deployed cold: a training dataset,
assembled offline, primes the probability estimates that the coder
then updates as the message arrives. Video coders initialize their
context models from tables trained on test corpora~\cite{Marpe2003}.
Dictionary coders ship trained dictionaries to jump-start the
compression of small records~\cite{Zstd}. Network packet compressors
memorize previously transmitted sequences as common side information~\cite{BeiramiFekri2012}. Pretrained models supply coding distributions
directly~\cite{Deletang2024}. In every case the same design question
arises: the training data need not come from the source being encoded,
and the coder must decide how much to trust them.

At its extremes the question has classical answers. With no training
data, a universal code for the memoryless class over an~$m$-ary
alphabet pays redundancy $\frac{d}{2}\ln n + O(1)$ nats on
messages of length~$n$, $d=m-1$~\cite{KT1981,Rissanen1984,ClarkeBarron1990}. The KT estimator attains
it~\cite{KT1981}. With training data from the \emph{same} source, a matched sample
of length~$\ell$ in the KT counts reduces this to
$\frac{d}{2}\ln\frac{\ell+n}{\ell}+O(1)$ nats~\cite{KT1981,ReznikSzpankowski2003}, the memory-assisted optimum of~\cite{BeiramiFekri2012}. Between the extremes stands the mismatch, and
the treatments to date control the training data by \emph{selection}:
under a per-symbol divergence~$D$, the redundancy-minimizing amount of
full-weight training data is finite, about~$d/(2D)$ symbols, and
more \emph{increases} the redundancy~\cite{ReznikAnisimov2003a,ReznikAnisimov2003b}. For mixture sources
the first-order optimal use of a heterogeneous memory is clustering,
full-weight use of the matched subset~\cite{BeiramiSardariFekri2013}.
Practice follows the same logic: one dictionary per data type, context
tables per coding mode. The control variables have been \emph{which}
training data and \emph{how much}. The weight at which the retained
data enter the code has not been a variable at all.

This paper makes it the variable. The construction is imported from
Bayesian statistics: the \emph{power prior} of Ibrahim and Chen~\cite{IbrahimChen2000} incorporates a previously observed dataset by
raising its likelihood to a power~$a_0\in[0,1]$, conceived from the
outset as a scalar expressing the difference between the old data and
the new. A related draft develops the theory of optimal discounting in
this broader statistical context~\cite{ReznikDiscounting2026}. Applied to the KT estimator, the construction is one line,
scale the training counts by~$\xi\in[0,1]$, and in the coding setting
the choice of the exponent has a closed-form answer. For a training
sequence of length~$\ell$ at Kullback--Leibler divergence~$D$ nats per
symbol from the message source, the redundancy-minimizing weight is
\begin{equation}\label{eq:xistar-intro}
\xi^* \;=\; \frac{d}{2\ell D + d},
\qquad d = m-1,
\end{equation}
and the optimal effective training length $\ell^*_\eff=\xi^*\ell$
obeys the \emph{harmonic combination law} $1/\ell^*_\eff = 1/\ell + 2D/d$.
At $D=0$ the training data are pooled in full, $\xi^*=1$. Any
mismatch caps the usable training information at
$\ell_\infty=d/(2D)$ symbols, however much was collected. Weighting strictly beats truncating
(Section~\ref{sec:law}).

The practical content of the paper is what to do when~$D$ is unknown,
which it always is. Section~\ref{sec:algs} develops three
implementable selectors, sketched in Figs.~\ref{fig:offline}--\ref{fig:mixture}: an
\emph{offline} weight from the spread of the training set, a
\emph{plug-in} loop that re-estimates~$D$ from the decoded prefix, and
a twice-universal \emph{mixture} over a grid of weights that needs
no estimate at all.
Section~\ref{sec:exp} evaluates all three on the ten English texts of
the Calgary and Canterbury corpora. Section~\ref{sec:concl} brings
conclusions and discusses applications.

\section{The Weighted KT Code}
\label{sec:setting}

Symbols take values in the alphabet $\mathcal{X}=\{1,\dots,m\}$. The
training sequence~$x_T$ consists of~$\ell$ independent draws from a
distribution~$P_T$. The message $x_S=(x_{S,1},\dots,x_{S,n})$ consists
of~$n$ independent draws from a distribution~$P_S$, independent of the
training. Both distributions are unknown to the coder and need not
coincide. Their mismatch is the \emph{Kullback--Leibler divergence}
$D=D(P_T\,\|\,P_S)=\sum_{x} P_T(x)\ln(P_T(x)/P_S(x))$. Information is
measured in nats throughout the analysis, with~$\ln$ the natural logarithm. Section~\ref{sec:exp} converts to bits
for reporting. Throughout, $d=m-1$ is the model dimension: $d/2$ nats
is the expected amount by which a fitted~$m$-ary model over-describes
its own sample, the \emph{Wilks excess}~\cite{Wilks1938}, and it is the
constant that recurs in many formulae below. For a sequence~$u$,
$k_x(u)$ counts the occurrences of symbol~$x$ in~$u$.

The coder is a KT estimator~\cite{KT1981} whose training counts enter
once, scaled by a weight~$\xi\in[0,1]$, and whose message counts enter
at full weight:
\begin{equation}\label{eq:kt}
Q_\xi(x_{S,t}=x \mid x_T,\, x_{S,1}\cdots x_{S,t-1})
 = \frac{\xi\, k_x(x_T) + k_x(x_{S,1}\cdots x_{S,t-1}) + \tfrac12}
        {\xi\ell + t-1 + \tfrac{m}{2}} .
\end{equation}
An arithmetic coder driven by these conditionals produces a code of
length~$-\ln Q_\xi$ nats, up to rounding. Construction \eqref{eq:kt}
is the power prior of Ibrahim and Chen~\cite{IbrahimChen2000}
instantiated for the multinomial family with the Jeffreys initial
prior: the trained prior is the Dirichlet distribution with parameters
$\xi k_x(x_T)+\frac12$, and~$\xi$ plays the role of the discounting
parameter~$a_0$. At~$\xi=1$ the
training data are treated as part of the message. At~$\xi=0$ they are
discarded and~\eqref{eq:kt} is the classical KT estimator. Since
scaling the counts of~$\ell$ symbols by~$\xi$ scales their information
content by~$\xi$, the natural gauge of the trained model's strength is
the \emph{effective training length} $\ell_\eff=\xi\ell$: the number of
matched symbols that would be as informative.

Performance is measured by the \emph{average redundancy}
$R_n(\xi)=\E_{x_T}\,D(P_S^{\,n}\,\|\,Q_\xi(\cdot\mid x_T))$ nats: the
expected excess code length over the entropy of the message, averaged
over the training sequence, with $Q_\xi(\cdot\mid x_T)$ the joint
distribution assigned by~\eqref{eq:kt}. By the chain rule it is the
sum of the per-symbol excesses. Its values at the endpoints were
quoted in the introduction. The question is what happens between them
when the sources differ.

\section{Redundancy Analysis}
\label{sec:law}

The analysis needs one classical object. Write
$H(p)=-\sum_x p(x)\ln p(x)$ for the entropy in nats and
$M_\gamma=\gamma P_T+(1-\gamma)P_S$ for the mixture of the two
sources. The \emph{skewed Jensen--Shannon divergence}~\cite{Lin1991} between~$P_T$ and~$P_S$ at skew~$\gamma$ is
\begin{equation}\label{eq:JS}
\mathrm{JS}_\gamma(P_T,P_S)
 \;=\; H(M_\gamma)-\gamma H(P_T)-(1-\gamma)H(P_S).
\end{equation}
It measures the information
that one symbol carries about which of the two sources produced it,
when the two are mixed in proportions~$\gamma$ and~$1-\gamma$. Three
of its properties are needed.

\begin{lemma}\label{lem:JS}
For interior~$P_T,P_S$ and~$\lambda>0$, with
$\gamma_s=\lambda/(\lambda+s)$:
\begin{align}
\frac{d}{ds}\Bigl[(\lambda+s)\,\mathrm{JS}_{\gamma_s}(P_T,P_S)\Bigr]
 &= D(P_S\,\|\,M_{\gamma_s}), \label{eq:JSderiv}\\
\lim_{\gamma\to 0}\ \mathrm{JS}_\gamma(P_T,P_S)/\gamma
 &= D(P_T\,\|\,P_S) = D, \label{eq:JSlimit}\\
D(P_S\,\|\,M_\gamma) = \gamma^2 D_2 &+ O(\gamma^3),
 \label{eq:JSquad}
\end{align}
with $D_2=\frac{1}{2}\sum_x(P_T(x)-P_S(x))^2/P_S(x)$ and
$D_2=D\,(1+O(\|P_T-P_S\|))$.
\end{lemma}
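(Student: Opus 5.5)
All three claims follow by direct calculus on the definition \eqref{eq:JS}, using only that $P_T,P_S$ are interior, so every logarithm below is finite and smooth in the mixing weight.

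For \eqref{eq:JSderiv}, we first rewrite the bracket as an unnormalized entropy. Put $F(s)=(\lambda+s)\,\mathrm{JS}_{\gamma_s}$ and note that $(\lambda+s)\gamma_s=\lambda$ and $(\lambda+s)(1-\gamma_s)=s$. Then
\[
F(s)=(\lambda+s)H(M_{\gamma_s})-\lambda H(P_T)-sH(P_S).
\]
Writing $(\lambda+s)M_{\gamma_s}=\lambda P_T+sP_S=:N_s$, the first term equals $-\sum_x N_s(x)\ln N_s(x)+(\lambda+s)\ln(\lambda+s)$. The term $-\lambda H(P_T)$ does not depend on $s$. Differentiating with $\partial_s N_s=P_S$ gives
\[
F'(s)=-\sum_x P_S(x)\bigl(\ln N_s(x)+1\bigr)+\ln(\lambda+s)+1+\sum_x P_S(x)\ln P_S(x).
\]
Since $\ln N_s-\ln(\lambda+s)=\ln M_{\gamma_s}$ and $\sum_x P_S(x)=1$, the constants cancel. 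What remains is $\sum_x P_S\ln(P_S/M_{\gamma_s})=D(P_S\|M_{\gamma_s})$, which is the claimed identity.

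For \eqref{eq:JSlimit}, note that $\mathrm{JS}_0=0$, so the limit equals $\frac{d}{d\gamma}\mathrm{JS}_\gamma$ at $\gamma=0$. Since $\partial_\gamma M_\gamma=P_T-P_S$,
\[
\frac{d}{d\gamma}H(M_\gamma)=-\sum_x (P_T-P_S)(x)\bigl(\ln M_\gamma(x)+1\bigr).
\]
At $\gamma=0$ the $+1$ terms cancel because $\sum_x(P_T-P_S)(x)=0$, leaving $-\sum_x(P_T-P_S)\ln P_S$. Adding the derivative $-H(P_T)+H(P_S)$ of the linear part gives $\sum_x P_T\ln P_T-\sum_x P_T\ln P_S=D(P_T\|P_S)$.

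For \eqref{eq:JSquad}, set $\delta=P_T-P_S$, so that $M_\gamma=P_S(1+\gamma\delta/P_S)$ and
\[
D(P_S\|M_\gamma)=-\sum_x P_S(x)\ln\bigl(1+\gamma\delta(x)/P_S(x)\bigr).
\]
Expanding $\ln(1+u)=u-u^2/2+O(u^3)$ is legitimate because $\delta/P_S$ is bounded for interior $P_S$, which also makes the $O(\gamma^3)$ uniform. The linear term vanishes since $\sum_x\delta(x)=0$. The quadratic term gives $\frac{\gamma^2}{2}\sum_x\delta(x)^2/P_S(x)=\gamma^2D_2$.

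The last relation, $D_2=D\,(1+O(\|P_T-P_S\|))$, is the only step that needs care. We expand $D(P_T\|P_S)=\sum_x(P_S+\delta)\ln(1+\delta/P_S)$ to third order:
\[
D=\sum_x\delta+\tfrac12\sum_x\delta^2/P_S+O\bigl(\|\delta\|^3\bigr).
\]
The first sum is $0$, so $D=D_2+O(\|\delta\|^3)$. The remaining point is that $D_2\ge c\|\delta\|^2$ with $c>0$ depending on $\min_x P_S(x)$, because $P_S(x)\le1$ gives $D_2\ge\frac12\|\delta\|_2^2$. Hence the relative error $O(\|\delta\|^3)/D_2$ is $O(\|\delta\|)$. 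For $\delta=0$ both sides vanish and the statement is trivial.
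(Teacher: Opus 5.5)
Your proof is correct and is essentially the paper's: \eqref{eq:JSderiv} by differentiating the unnormalized entropy $(\lambda+s)H(M_{\gamma_s})$ in $s$, and \eqref{eq:JSquad} and $D_2=D(1+O(\|P_T-P_S\|))$ by Taylor expansion about $P_S$. The one variation is \eqref{eq:JSlimit}, which you obtain as $\frac{d}{d\gamma}\mathrm{JS}_\gamma|_{\gamma=0}$ instead of the paper's split $\mathrm{JS}_\gamma=\gamma D(P_T\|M_\gamma)+(1-\gamma)D(P_S\|M_\gamma)$ with the second term $O(\gamma^2)$, and one small slip: your bound $D_2\ge\frac12\|\delta\|_2^2$ has a universal constant, since only the cubic remainder depends on $\min_x P_S(x)$.
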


The proofs are one-line computations: differentiate
$\nu H(M_\gamma)$ in~$\nu=\lambda+s$ for \eqref{eq:JSderiv}, split
$\mathrm{JS}_\gamma=\gamma D(P_T\|M_\gamma)+(1-\gamma)D(P_S\|M_\gamma)$
for \eqref{eq:JSlimit}, and Taylor-expand about~$M_\gamma=P_S$ for
\eqref{eq:JSquad}.

With these identities, the redundancy of the weighted code admits an
expansion valid at every ratio of training to message information.

\begin{theorem}\label{thm:exact}
Let $\ell_\eff=\xi\ell$ and $\gamma=\ell_\eff/(\ell_\eff+n)$. Then, as
$\ell_\eff\to\infty$ with~$\xi\in(0,1]$ fixed or vanishing, in nats,
\begin{equation}\label{eq:exact}
R_n(\xi)
 = \frac{d}{2}\ln\frac{\ell_\eff+n}{\ell_\eff}
 + (\ell_\eff+n)\,\mathrm{JS}_\gamma(P_T,P_S)
 - \frac{(1-\xi)\,d\,n}{2(\ell_\eff+n)}
 + O\bigl(\ell_\eff^{-1/2}\bigr) + O(1)^*,
\end{equation}
where~$O(1)^*$ collects terms bounded uniformly in~$(\xi,\ell,n)$.
\end{theorem}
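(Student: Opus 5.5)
The plan is to compute the redundancy exactly as a ratio of Dirichlet normalizers and then expand. Since \eqref{eq:kt} is the sequential form of a Bayes mixture with Dirichlet prior of parameters $a_x=\xi k_x(x_T)+\frac12$, the joint code probability is
$Q_\xi(x_S\mid x_T)=B(a+k(x_S))/B(a)$, with $B(\alpha)=\prod_x\Gamma(\alpha_x)/\Gamma(\sum_x\alpha_x)$. Hence
\[
R_n(\xi)=\E\bigl[\ln B(a)-\ln B(a+k(x_S))\bigr]-nH(P_S),
\]
because $\E[-\ln P_S^n(x_S)]=nH(P_S)$. I would apply Stirling's formula $\ln\Gamma(z+\frac12)=z\ln z-z+\frac12\ln 2\pi+O(1/z)$ to both normalizers. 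For a mass $N$ with counts $c_x$ this gives
\[
-\ln B = N\,H(c/N)+\tfrac d2\ln N+\text{const}+\text{corrections}.
\]
Applied to $N=\ell_\eff+n$ with counts $c=\xi k(x_T)+k(x_S)$, and to $N=\ell_\eff$ with counts $\xi k(x_T)$, this yields
\[
-\ln Q_\xi=(\ell_\eff+n)H(\hat M)-\ell_\eff H(\hat P_T)+\tfrac d2\ln\frac{\ell_\eff+n}{\ell_\eff}+\dots,
\]
where $\hat M$ and $\hat P_T$ are the weighted empirical distributions. The constants $\frac12\ln 2\pi$ and the $\frac12\ln c_x$ pieces that do not cancel go into $O(1)^*$. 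This uses interior $P_T,P_S$: every weighted count is at least of order $\ell_\eff P_T(x)$, so Stirling applies uniformly.

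Next I would take expectations by a second-order expansion of $H$ around the population values. Note that $\E\hat M=M_\gamma$ with $\gamma=\ell_\eff/(\ell_\eff+n)$, and $\E\hat P_T=P_T$. The zeroth-order terms give
\[
(\ell_\eff+n)H(M_\gamma)-\ell_\eff H(P_T)-nH(P_S)=(\ell_\eff+n)\mathrm{JS}_\gamma(P_T,P_S)
\]
by \eqref{eq:JS}, which is the second term of \eqref{eq:exact}. The first-order terms vanish in expectation. The second-order terms are the plug-in entropy biases, $-\frac{1}{2N}\sum_x \mathrm{Var}(c_x)/M(x)$ per unit mass.

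For the training part, the covariance of $\xi k(x_T)$ is $\xi^2\ell$ times the multinomial covariance, so the bias is $-\ell_\eff\cdot d/(2\ell)=-\xi d/2$. For the mixed part, the covariance is $\xi^2\ell\,\Sigma_T+n\,\Sigma_S$, so the bias is
\[
-\frac{d(\xi\ell_\eff+n)}{2(\ell_\eff+n)}+(\text{mismatch-dependent})\,\Sigma\text{ terms}.
\]
Subtracting the two biases gives exactly $-\frac{(1-\xi)dn}{2(\ell_\eff+n)}$, the third term of \eqref{eq:exact}. When $\xi=1$ this term vanishes, consistent with the matched Wilks cancellation.

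The step I expect to be the main obstacle is controlling the remainders uniformly in $(\xi,\ell,n)$, in particular when $\gamma$ is not close to $1$ or $0$. The mismatch-dependent parts of the variance terms come from replacing $M_\gamma(x)$ by $P_T(x)$ or $P_S(x)$ in denominators and from $\Sigma_T\neq\Sigma_S$. These parts must be shown to be $O(1)$ uniformly. I would bound them by writing $\Sigma_T/M_\gamma$ and $\Sigma_S/M_\gamma$ as bounded matrices, which is possible because interiority makes $M_\gamma$ bounded below. Their traces against the weights $\xi^2\ell/(\ell_\eff+n)$ and $n/(\ell_\eff+n)$ are then at most constants. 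The third-order Taylor terms of $H$ involve third central moments of normalized counts, which are $O(N^{-1/2})$ after multiplication by $N$. The Stirling $O(1/c_x)$ errors contribute $O(\ell_\eff^{-1})$. Together these give the $O(\ell_\eff^{-1/2})$ term.

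The regime where $n$ is small relative to $\ell_\eff$ needs care, since then $\hat M$ is nearly $\hat P_T$ and the two expansions must be kept coupled rather than treated separately. There I would expand $(\ell_\eff+n)H(\hat M)-\ell_\eff H(\hat P_T)$ in $s=n$ jointly. I would use \eqref{eq:JSderiv} to identify the $s$-derivative with $D(P_S\|M_{\gamma_s})$, so that the difference is an integral of bounded quantities and its fluctuation stays $O(1)$ uniformly.
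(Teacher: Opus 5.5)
Your proof is correct, and it takes a genuinely different route from the paper's.

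\textbf{The two routes.} The paper argues sequentially. It expands each per-symbol excess $\rho_t$ about $M_{\gamma_s}$ and obtains \eqref{eq:perstep}. It then sums over $t$: $\sum_s D(P_S\|M_{\gamma_s})$ integrates into the Jensen--Shannon term via \eqref{eq:JSderiv}, and $\frac d2(\xi\ell_\eff+s)/(\ell_\eff+s)^2$ splits into the logarithm and the $-(1-\xi)dn/(2(\ell_\eff+n))$ term.

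You work with the closed-form Dirichlet marginal instead. The logarithm comes from Stirling, the Jensen--Shannon term is the zeroth-order term by the definition \eqref{eq:JS}, and the third term is the difference of the two plug-in entropy biases. In your route the KT offset does the job of the paper's pseudo-count cancellation. Since $\ln\Gamma(c+\frac12)=c\ln c-c+\frac12\ln 2\pi+O(1/c)$ has no $\ln c$ term, there are in fact no leftover $\frac12\ln c_x$ pieces, and the constants cancel exactly between the two normalizers.

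\textbf{What each route buys.} The paper's route yields the per-step expansion itself. Theorem~\ref{thm:persymbol} relies on it, and your global computation does not produce it.

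Your route needs fewer ingredients and makes uniformity transparent. The mismatch corrections carry weights $\xi\gamma\le 1$ and $1-\gamma\le 1$, so they are $O(\|P_T-P_S\|)$ uniformly in $n$. By contrast, the paper's per-step remainder $O(\|P_T-P_S\|/(\ell_\eff+s))$, summed as written, would leave a $\|P_T-P_S\|\ln(1+n/\ell_\eff)$ term. It must be sharpened to order $\|P_T-P_S\|\,\ell_\eff/(\ell_\eff+s)^2$ to fit in $O(1)^*$. You also need neither \eqref{eq:JSderiv} nor a sum-to-integral step.

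\textbf{Your last paragraph.} It is unnecessary. Both normalizer expansions have absolute remainders, so the small-$n$ regime needs no coupling. These remainders are in fact $O(\ell_\eff^{-1})$, because the third central moment of the counts is $O(N)$ rather than $O(N^{3/2})$. The only extra care needed is for the exponentially rare events with small counts, which the boundedness of $H$ and of the Stirling error near $0$ handles.
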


\begin{proof}[Proof sketch]
By the chain rule the redundancy is the sum $R_n(\xi)=\sum_t\rho_t$
of the per-symbol excess code lengths
\begin{equation}\label{eq:rhodef}
\rho_t \;=\; \E\,D\bigl(P_S\,\big\|\,
  Q_\xi(\cdot\mid x_T,\,x_{S,1}\cdots x_{S,t-1})\bigr).
\end{equation}
Write~$s=t-1$. The predictive probabilities \eqref{eq:kt} concentrate
about the composition $M_{\gamma_s}$,
$\gamma_s=\ell_\eff/(\ell_\eff+s)$, with coordinate variances
$[\xi^2\ell P_T(1{-}P_T)+s P_S(1{-}P_S)]/(\ell_\eff+s)^2$ by
independence of the two count sets. In the second-order expansion of~$\rho_t$ about~$M_{\gamma_s}$, the bias of the~$\frac12$ pseudo-counts
cancels the source-dependent part of the variance term, and, with
$\xi^2\ell=\xi\ell_\eff$, the expansion collapses to
\begin{equation}\label{eq:perstep}
\rho_t = D(P_S\,\|\,M_{\gamma_s})
 + \frac{d}{2}\cdot\frac{\xi\ell_\eff+s}{(\ell_\eff+s)^2}
 + O\Bigl(\tfrac{\|P_T-P_S\|}{\ell_\eff+s}\Bigr)
 + O\Bigl(\tfrac{1}{(\ell_\eff+s)^{3/2}}\Bigr).
\end{equation}
The cancellation is what makes a single weight formula possible: the
estimation cost carries the universal coefficient~$d/2$ whether the
trained model is centered well or badly, and all of the mismatch sits
in the divergence term. Summing over~$t$ with sums replaced by
integrals at~$O(1)$ total error, \eqref{eq:JSderiv} with
$\lambda=\ell_\eff$ integrates the first term to the Jensen--Shannon
term of~\eqref{eq:exact}, and the second term integrates to the
remaining two.
\end{proof}

The regime of practice is two-scale: the effective training
information is ample, $\xi\ell\gg 1$, yet small against the message,
$\xi\ell\ll n$. Four corollaries follow: the redundancy in this
regime, the weight that minimizes it, the harmonic form of the
optimum, and the value attained.

\begin{corollary}[Two-scale redundancy]\label{cor:twoscale}
In the regime $1\ll\xi\ell\ll n$,
\begin{equation}\label{eq:main}
R_n(\xi) = \frac{d}{2}\ln\frac{n}{\xi\ell}
 + \xi\Bigl(\ell D+\frac{d}{2}\Bigr) -\frac{d}{2}
 + o(1) + \epsilon,
\end{equation}
where~$\epsilon$ collects the second-order remainders in the mismatch
and the~$O(\xi\ell/n)$ terms of the regime.
\end{corollary}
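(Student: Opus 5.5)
We derive \eqref{eq:main} directly from Theorem~\ref{thm:exact} by expanding each of its three explicit terms in the small parameter $\gamma=\ell_\eff/(\ell_\eff+n)$, which is $O(\xi\ell/n)=o(1)$ in the regime $1\ll\xi\ell\ll n$. The remainder $O(\ell_\eff^{-1/2})$ of \eqref{eq:exact} becomes $o(1)$ because $\ell_\eff\to\infty$. We then track which pieces are absorbed into $\epsilon$.

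\emph{Logarithmic term.} We write $\ln\frac{\ell_\eff+n}{\ell_\eff}=\ln\frac{n}{\xi\ell}+\ln(1+\xi\ell/n)$. The second summand is $O(\xi\ell/n)$ and goes into $\epsilon$. This produces the leading $\frac{d}{2}\ln\frac{n}{\xi\ell}$.

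\emph{Estimation correction.} We have $\frac{(1-\xi)dn}{2(\ell_\eff+n)}=\frac{(1-\xi)d}{2}(1-\gamma)$. Up to an $O(\xi\ell/n)$ term this equals $\frac{d}{2}-\frac{\xi d}{2}$. With the minus sign in \eqref{eq:exact}, it contributes $-\frac d2+\xi\frac d2$, which accounts for the $\xi\frac d2$ inside the bracket and the trailing $-\frac d2$.

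\emph{Jensen--Shannon term.} This is the main step. We need $(\ell_\eff+n)\mathrm{JS}_\gamma(P_T,P_S)=\xi\ell D+\epsilon$. By \eqref{eq:JSlimit}, $\mathrm{JS}_\gamma=\gamma D+o(\gamma)$, and $(\ell_\eff+n)\gamma=\ell_\eff=\xi\ell$, so the leading term is $\xi\ell D$. The difficulty is that the error $(\ell_\eff+n)\,o(\gamma)=\ell_\eff\,o(1)$ need not be small in absolute terms, since $\ell_\eff\to\infty$. It therefore has to be identified explicitly rather than dismissed.

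To control this error we use the decomposition already cited for \eqref{eq:JSlimit}: $\mathrm{JS}_\gamma=\gamma D(P_T\|M_\gamma)+(1-\gamma)D(P_S\|M_\gamma)$. By \eqref{eq:JSquad}, the second part is $\gamma^2D_2+O(\gamma^3)$. Multiplied by $\ell_\eff+n$, it gives $\gamma\ell_\eff D_2\,(1+O(\gamma))$, which is $O(\xi\ell/n)\cdot\xi\ell D_2$. This is a regime term, second order in the mismatch times $\gamma$, and it belongs in $\epsilon$. For the first part, we expand $D(P_T\|M_\gamma)$ about $\gamma=0$, where it equals $D$. The derivative at $0$ is $-\sum_xP_T(x)(P_T(x)-P_S(x))/P_S(x)=-2D_2$, which gives $D(P_T\|M_\gamma)=D-2\gamma D_2+O(\gamma^2)$. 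Then $(\ell_\eff+n)\gamma D(P_T\|M_\gamma)=\xi\ell D-2\xi\ell\gamma D_2+\dots$. The correction is again of the form $O(\xi\ell/n)\cdot\xi\ell D_2$, and we place it in $\epsilon$.

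Read this way, $\epsilon$ consists of the $O(\xi\ell/n)$ regime terms, including those multiplied by $\xi\ell D_2$, together with the second-order mismatch remainders $D_2-D=D\,O(\|P_T-P_S\|)$ from Lemma~\ref{lem:JS}. The $O(1)^*$ of Theorem~\ref{thm:exact} must also be accounted for. We would argue that it is a constant independent of $\xi$ that cancels against the endpoint normalization, or else fold it into $o(1)+\epsilon$ as the statement's loose bookkeeping permits. Making this last absorption honest is the only other delicate point.

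Summing the three pieces yields \eqref{eq:main}.
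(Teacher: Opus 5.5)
This is the paper's own argument: apply \eqref{eq:JSlimit} to the middle term of \eqref{eq:exact} and expand the remaining terms in $\xi\ell/n$. Your split of $\mathrm{JS}_\gamma$ into its two divergences, which nets to $(\ell_\eff+n)\,\mathrm{JS}_\gamma=\xi\ell D-\xi\ell\,\gamma D_2+O(\xi\ell\,\gamma^2)$, usefully makes the error of that step an explicit relative $O(\xi\ell/n)$ correction instead of an unquantified $\ell_\eff\cdot o(1)$, which the paper's one-line proof leaves implicit. The paper also drops the $O(1)^*$ of Theorem~\ref{thm:exact} without comment, so your hedge is no weaker than its proof; but discard the ``endpoint normalization'' option, since \eqref{eq:main} is an absolute expansion and $O(1)^*$ is only claimed bounded, not $\xi$-independent, and justify the absorption instead by noting that this $O(1)^*$ is the sum-to-integral error of the proof sketch, which for $\ell_\eff\to\infty$ is at most of the size of the largest summands of \eqref{eq:perstep}, namely $D(P_S\,\|\,P_T)$ plus $O(d/\ell_\eff)$: a mismatch-order term negligible beside $\xi\ell D$, plus $o(1)$.
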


\begin{proof}
Apply \eqref{eq:JSlimit} to the middle term of~\eqref{eq:exact} and
expand the remaining terms in~$\xi\ell/n$.
\end{proof}

\begin{corollary}[Optimal weight]\label{cor:weight}
For~$D>0$, the redundancy~\eqref{eq:main} is minimized at
\begin{equation}\label{eq:xistar}
\xi^* \;=\; \frac{d}{2\ell D + d}.
\end{equation}
\end{corollary}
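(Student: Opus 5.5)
Treat \eqref{eq:main} as a function of the single variable $\xi$. Keep only the leading terms $f(\xi)=\frac{d}{2}\ln\frac{n}{\xi\ell}+\xi\bigl(\ell D+\frac d2\bigr)-\frac d2$, discarding the $o(1)+\epsilon$ remainder. Then minimize $f$ over $\xi\in(0,1]$ by elementary calculus. The terms $\frac d2\ln n$, $-\frac d2\ln\ell$ and $-\frac d2$ do not depend on $\xi$. So the objective reduces to $g(\xi)=-\frac d2\ln\xi+\xi\bigl(\ell D+\frac d2\bigr)$, which is the sum of a strictly convex function and a linear one on $(0,\infty)$.

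The derivative is $g'(\xi)=-\frac{d}{2\xi}+\ell D+\frac d2$. Setting it to zero gives $\xi=\frac{d/2}{\ell D+d/2}=\frac{d}{2\ell D+d}$, which is \eqref{eq:xistar}. Since $g''(\xi)=\frac{d}{2\xi^2}>0$, this stationary point is the unique global minimizer on $(0,\infty)$. For $D>0$ it lies strictly inside $(0,1)$, so the constraint $\xi\le 1$ is inactive. As a consistency check, the case $D=0$ would give $g'<0$ on $(0,1)$, so the optimum would move to the endpoint $\xi=1$, matching the remark in the introduction. The corollary excludes this case by assuming $D>0$.

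The main obstacle is justifying that the neglected remainder $o(1)+\epsilon$ does not move the minimizer at leading order. Corollary~\ref{cor:twoscale} holds only in the regime $1\ll\xi\ell\ll n$, so I would first check that $\xi^*\ell=\frac{d\ell}{2\ell D+d}$ lies in that regime. It is bounded above by $d/(2D)$, so it is large only when $D$ is small, of order at most $1/\ell$ up to the constant $d/2$. It is small against $n$ whenever $n\gg d/(2D)\wedge\ell$. In this regime $\epsilon$ collects second-order mismatch terms, which are $O(\xi\ell D\|P_T-P_S\|)$ by $D_2=D(1+O(\|P_T-P_S\|))$, together with $O(\xi\ell/n)$ terms. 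Both are smaller than the $\xi$-dependent part of $g$. Because $g$ is strictly convex with curvature $\frac{d}{2\xi^2}$ at its minimum, a perturbation of relative size $\eta$ shifts the minimizer by a relative amount $O(\eta)$. The statement should therefore be read as \eqref{eq:xistar} being the exact minimizer of the displayed expression \eqref{eq:main}, and the minimizer of $R_n$ up to these vanishing corrections. I would state it in that form.
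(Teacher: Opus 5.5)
Your argument is the paper's proof: drop the $\xi$-free terms of \eqref{eq:main}, note that $-\frac d2\ln\xi+\xi(\ell D+\frac d2)$ is strictly convex, and solve for its stationary point; your check that $\xi^*\in(0,1)$ when $D>0$ is a small addition the paper leaves implicit. The remainder discussion goes beyond what the corollary claims, since the statement concerns only the displayed expression, and one remark in it is wrong: $\xi^*\ell=\bigl(1/\ell+2D/d\bigr)^{-1}$ is large whenever $\ell\gg1$ and $D\ll d/2$, not only when $D=O(1/\ell)$.
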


\begin{proof}
The~$\xi$-dependent part of~\eqref{eq:main},
$-\frac{d}{2}\ln\xi+\xi(\ell D+\frac{d}{2})$, is strictly convex with
stationary point \eqref{eq:xistar}.
\end{proof}

\begin{corollary}[Harmonic law]\label{cor:law}
The optimal effective training length $\ell^*_\eff=\xi^*\ell$
satisfies
\begin{equation}\label{eq:harmonic}
\frac{1}{\ell^*_\eff} \;=\; \frac{1}{\ell}+\frac{2D}{d}
 \;=\; \frac{1}{\ell}+\frac{1}{\ell_\infty},
 \qquad \ell_\infty=\frac{d}{2D}.
\end{equation}
\end{corollary}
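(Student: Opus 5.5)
The harmonic law is a direct algebraic consequence of Corollary~\ref{cor:weight}, so the plan is to substitute the optimal weight~\eqref{eq:xistar} into the definition $\ell^*_\eff=\xi^*\ell$ and invert. With $D>0$ and $\ell>0$, Corollary~\ref{cor:weight} gives
\[
\ell^*_\eff \;=\; \xi^*\ell \;=\; \frac{d\,\ell}{2\ell D+d},
\]
which is strictly positive, so its reciprocal is well defined:
\[
\frac{1}{\ell^*_\eff} \;=\; \frac{2\ell D+d}{d\,\ell} \;=\; \frac{2D}{d}+\frac{1}{\ell}.
\]
Defining $\ell_\infty=d/(2D)$ rewrites the second summand as $1/\ell_\infty$, which yields~\eqref{eq:harmonic}.

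The only point requiring care is the boundary case $D=0$. There Corollary~\ref{cor:weight} does not apply, since the $\xi$-dependent part of~\eqref{eq:main} reduces to $-\frac{d}{2}\ln\xi+\xi\frac{d}{2}$. This function is decreasing on $(0,1]$, so it is minimized at $\xi^*=1$. That agrees with~\eqref{eq:xistar} evaluated at $D=0$, and gives $\ell^*_\eff=\ell$. The formula~\eqref{eq:harmonic} then holds with the convention $1/\ell_\infty=0$, i.e.\ $\ell_\infty=\infty$.

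I would also note two consequences that follow immediately from the reciprocal form. First, $\ell^*_\eff<\min(\ell,\ell_\infty)$. Second, $\ell^*_\eff\uparrow\ell_\infty$ as $\ell\to\infty$, which is the cap on usable training information claimed in the abstract.

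There is no real obstacle. The statement inherits whatever asymptotic validity Corollary~\ref{cor:weight} has, namely the regime $1\ll\xi^*\ell\ll n$ of Corollary~\ref{cor:twoscale}. So I would state the result as exact for the minimizer of~\eqref{eq:main}, not of $R_n$ itself.
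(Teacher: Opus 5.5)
Your proposal is correct and matches the paper, which gives no separate proof and treats the harmonic law as immediate from Corollary~\ref{cor:weight}: substitute $\xi^*=d/(2\ell D+d)$ into $\ell^*_\eff=\xi^*\ell$ and take reciprocals. Your $D=0$ discussion agrees with the paper's endpoint remark. One small correction: state your side claim $\ell^*_\eff<\min(\ell,\ell_\infty)$ for $D>0$ only, since at $D=0$ you get equality, $\ell^*_\eff=\ell$.
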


\begin{corollary}[Attained redundancy]\label{cor:attained}
At the optimal weight,
\begin{equation}\label{eq:attained}
R_n(\xi^*) \;=\; \frac{d}{2}\ln\frac{n}{\ell^*_\eff}
 \;+\; o(1) + \epsilon,
\end{equation}
the redundancy of a code trained on a matched sample of length~$\ell^*_\eff$.
\end{corollary}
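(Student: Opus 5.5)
Corollary~\ref{cor:attained} follows by substituting the optimal weight of Corollary~\ref{cor:weight} into the two-scale redundancy~\eqref{eq:main} of Corollary~\ref{cor:twoscale}, then rewriting the result with the harmonic law of Corollary~\ref{cor:law}. First I will check that $\xi^*$ lies in the regime where~\eqref{eq:main} holds. By~\eqref{eq:harmonic}, $\xi^*\ell=\ell^*_\eff$ is the harmonic combination of $\ell$ and $\ell_\infty=d/(2D)$. It therefore tends to infinity exactly when both $\ell$ and $\ell_\infty$ are large, so the assumption $1\ll\xi^*\ell\ll n$ amounts to $1\ll\ell^*_\eff\ll n$. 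This is the standing assumption of the regime, and I will state it explicitly. Under it, the $o(1)$ and $\epsilon$ terms of~\eqref{eq:main} carry over unchanged.

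The key algebraic step is the identity $\xi^*(\ell D+\tfrac d2)=\tfrac d2$, which is immediate from~\eqref{eq:xistar}:
\[
\xi^*\Bigl(\ell D+\frac d2\Bigr)=\frac{d}{2\ell D+d}\cdot\frac{2\ell D+d}{2}=\frac d2 .
\]
This is just the first-order stationarity condition $-\tfrac{d}{2\xi}+(\ell D+\tfrac d2)=0$ evaluated at the optimum. Substituting it into~\eqref{eq:main}, the linear term $\xi(\ell D+\tfrac d2)$ and the constant $-\tfrac d2$ cancel, leaving
\[
R_n(\xi^*)=\frac d2\ln\frac{n}{\xi^*\ell}+o(1)+\epsilon ,
\]
which is~\eqref{eq:attained} once we write $\xi^*\ell=\ell^*_\eff$.

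For the interpretive clause, I will compare with the matched case. Setting $\xi=1$, $D=0$ and training length $\ell^*_\eff$ in~\eqref{eq:exact}, the Jensen--Shannon term and the $(1-\xi)$ term vanish. What remains is $\tfrac d2\ln\tfrac{\ell^*_\eff+n}{\ell^*_\eff}+O(1)$, which in the regime $\ell^*_\eff\ll n$ is $\tfrac d2\ln\tfrac{n}{\ell^*_\eff}+o(1)$ up to the same uniform constants. The value attained thus coincides with that of a matched code of length $\ell^*_\eff$.

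The only delicate point is bookkeeping of the error terms. The remainder $\epsilon$ in~\eqref{eq:main} includes $O(\xi\ell/n)$ terms and second-order mismatch terms such as $D_2$ versus $D$. After substitution these become $O(\ell^*_\eff/n)$ and $O(\xi^*\ell D\,\|P_T-P_S\|)=O(d\,\|P_T-P_S\|)$. Both remain of the type already collected in $\epsilon$, so no new error class appears and the statement holds with the same $o(1)+\epsilon$ qualification.
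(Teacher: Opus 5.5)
Your proposal is correct and follows the paper's own argument. You substitute $\xi^*$ into \eqref{eq:main}, use $\xi^*(\ell D+\frac{d}{2})=\frac{d}{2}$ to cancel the constant $-\frac{d}{2}$, and rename $\xi^*\ell=\ell^*_\eff$. Your extra checks are sound supplements to what the paper leaves implicit: that $\xi^*$ lies in the two-scale regime, that a matched code of length $\ell^*_\eff$ attains the same value, and that the remainders stay of the kind collected in $\epsilon$ because $\ell^*_\eff D\le d/2$.
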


\begin{proof}
Substitute \eqref{eq:xistar} into \eqref{eq:main}: the linear term
equals~$\frac{d}{2}$ and cancels the constant.
\end{proof}

Four consequences matter for design.

\emph{Endpoints.} At~$D=0$, $\xi^*=1$ and~\eqref{eq:attained} is the
matched-sample $\frac{d}{2}\ln\frac{n}{\ell}$: matched data are used
in full. As~$\ell\to\infty$ at fixed~$D>0$, $\xi^*\to 0$ while
$\ell^*_\eff\to\ell_\infty$: the correct response to a growing
mismatched corpus is a vanishing weight, not a vanishing use of the
corpus.

\emph{Flatness.} Since \eqref{eq:xistar} gives $\xi^*\propto 1/D$
once~$\ell D\gg d$, a factor-$k$ error in an estimate of~$D$ changes
the redundancy~\eqref{eq:main} by $\frac{d}{2}(1/k-1+\ln k)$ nats, a
constant independent of~$\ell$, $D$, and~$n$. Estimation of~$D$
therefore needs only order-of-magnitude accuracy.

\emph{Price of full weight.} Setting~$\xi=1$ under mismatch costs
$\ell D-\frac{d}{2}\ln(1+2\ell D/d)$ nats over \eqref{eq:attained},
growing linearly in~$\ell D$, while setting~$\xi=0$ costs only
$\frac{d}{2}\ln\ell^*_\eff$ nats over it. The asymmetry is the
practical lesson: a long mismatched training sequence at full weight
is far worse than no training data at all.

\emph{Weighting beats truncation.} Truncation is the special case~$\xi=1$ applied to a prefix: by~\eqref{eq:main} with~$\ell$ replaced
by~$\ell_\eff$, it costs
$\frac{d}{2}\ln\frac{n}{\ell_\eff}+\ell_\eff D$, minimized at
$\ell_\eff=\ell_\infty$, recovering the truncation rule of~\cite{ReznikAnisimov2003a}, with minimum
$\frac{d}{2}\ln\frac{n}{\ell_\infty}+\frac{d}{2}$ nats. This exceeds
\eqref{eq:attained} by an amount growing to~$d/2$ nats as~$\ell$
grows. Discounting a heterogeneous database beats curating a matched
subset of the same effective size.

One more fact is needed to justify the algorithms: nothing forces the
weight to be constant during encoding, and one may ask whether a
decaying schedule~$\xi_t$ beats the constant~$\xi^*$. No schedule
helps.

\begin{theorem}[Per-symbol optimality]\label{thm:persymbol}
Let~$\rho_t(\xi)$ be the per-symbol excess \eqref{eq:rhodef} under
weight~$\xi$, and $\gamma_t=\ell_\eff/(\ell_\eff+t-1)$. Then
\begin{equation}\label{eq:rhot}
\rho_t(\xi) = \gamma_t^2\,D_2
 + \frac{d}{2}\cdot\frac{\xi^2\ell+t-1}{(\ell_\eff+t-1)^2}
 + O\bigl(\gamma_t^3\bigr)
 + O\Bigl(\tfrac{\|P_T-P_S\|}{\ell_\eff+t-1}\Bigr)
 + O\Bigl(\tfrac{1}{(\ell_\eff+t-1)^{3/2}}\Bigr),
\end{equation}
and for every~$t\ge 1$ the minimizer of the leading terms of~\eqref{eq:rhot} is the same value, determined by
$1/(\xi^*\ell)=1/\ell+2D_2/d$, independent of~$t$. By
Lemma~\ref{lem:JS} it coincides with~\eqref{eq:harmonic} in the
small-mismatch regime.
\end{theorem}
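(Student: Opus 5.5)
The plan is to start from the per-step expansion \eqref{eq:perstep} obtained in the proof sketch of Theorem~\ref{thm:exact} and then minimize the leading part explicitly for each fixed~$t$. With $s=t-1$ and $\gamma_t=\ell_\eff/(\ell_\eff+s)$, I would take the first term $D(P_S\|M_{\gamma_t})$ of \eqref{eq:perstep} and replace it by $\gamma_t^2D_2+O(\gamma_t^3)$ using \eqref{eq:JSquad} of Lemma~\ref{lem:JS}. The second term of \eqref{eq:perstep} is already $\frac{d}{2}(\xi\ell_\eff+s)/(\ell_\eff+s)^2$. Since $\xi\ell_\eff=\xi^2\ell$, it is exactly the middle term of \eqref{eq:rhot}. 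The two remainders of \eqref{eq:perstep} carry over unchanged, and this establishes \eqref{eq:rhot}.

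For the optimization I would change variables to $L=\xi\ell$, the effective length. This step matters because both leading terms depend on~$\xi$ only through~$L$ once $\xi^2\ell$ is written as $L^2/\ell$. The leading part then becomes
\begin{equation*}
f_s(L)=\frac{cL^2+\frac{d}{2}s}{(L+s)^2},
\qquad c=D_2+\frac{d}{2\ell}=\frac{d}{2}\Bigl(\frac{1}{\ell}+\frac{2D_2}{d}\Bigr).
\end{equation*}
A direct differentiation gives a numerator of $f_s'(L)$ proportional to
\begin{equation*}
2cL(L+s)-2\bigl(cL^2+\tfrac{d}{2}s\bigr)=s\,(2cL-d).
\end{equation*}
Hence for every $s\ge1$ the function $f_s$ decreases on $(0,d/(2c))$ and increases afterwards. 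Its unique minimizer is therefore $L^*=d/(2c)$, that is $1/(\xi^*\ell)=1/\ell+2D_2/d$. The factor~$s$ drops out of the stationarity condition, and this is the independence of~$t$ claimed in the theorem. I would also check admissibility: because $1/L^*\ge1/\ell$, we have $\xi^*=L^*/\ell\in(0,1]$, so the minimizer is feasible.

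The delicate point is $t=1$ ($s=0$). There $f_0(L)\equiv c$ is constant, so every~$\xi$ minimizes the leading terms, and in particular so does~$\xi^*$. I would state the claim for $t=1$ in this weak sense, or obtain it as the limit $s\downarrow0$ of the unique minimizers, which is again~$\xi^*$. Finally, the identification with \eqref{eq:harmonic} uses $D_2=D(1+O(\|P_T-P_S\|))$ from Lemma~\ref{lem:JS}. This gives $2D_2/d=2D/d\,(1+O(\|P_T-P_S\|))$, so the two characterizations of~$\ell^*_\eff$ agree to first order in small mismatch.

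The main obstacle is justifying that minimizing the leading terms is meaningful, i.e.\ that the remainders $O(\gamma_t^3)$, $O(\|P_T-P_S\|/(\ell_\eff+s))$ and $O((\ell_\eff+s)^{-3/2})$ do not shift the minimizer at leading order. I would handle this by noting that $f_s$ has curvature of order $s/(L+s)^{3}$ at $L^*$, and that the remainders are of strictly lower order in the two-scale regime $1\ll\ell_\eff$ with small mismatch. I would accept this at the same level of rigor as the proof sketch of Theorem~\ref{thm:exact}.
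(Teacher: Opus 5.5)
Your proposal is correct and follows the paper's proof. You derive \eqref{eq:rhot} from \eqref{eq:perstep} via \eqref{eq:JSquad}, and your derivative $s(2cL-d)/(L+s)^3$ in $L=\xi\ell$, multiplied by $\ell$, is exactly the paper's $\frac{\ell s}{(\xi\ell+s)^3}[2\xi\ell D_2+d\xi-d]$, with the factor $s$ dropping out of the stationarity condition; your two additions are also correct and go slightly beyond the paper: the check that $\xi^*\le 1$ is feasible, and the observation that at $t=1$ ($s=0$) the leading terms are constant in $\xi$, so $\xi^*$ is then a minimizer only in the weak sense, a case the paper's sign-change argument passes over.
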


\begin{proof}
Expression \eqref{eq:rhot} is~\eqref{eq:perstep} with the mismatch
term expanded by~\eqref{eq:JSquad}. The~$\xi$-derivative of the
leading terms at fixed~$s=t-1$ is
$\frac{\ell s}{(\xi\ell+s)^3}[2\xi\ell D_2+d\,\xi-d]$,
whose bracket vanishes at $\xi=d/(2\ell D_2+d)$ independently of~$s$, and changes sign from negative to positive there.
\end{proof}

The anticipated benefit of a decaying schedule is already delivered by
the count updates: the influence of the training counts on
\eqref{eq:kt} is the factor $\ell^*_\eff/(\ell^*_\eff+t-1)$, which
decays on its own, and~\eqref{eq:rhot} shows this automatic decay to
be the right one. A manual decay would double-count the fade. The
constant~$\xi^*$ is therefore the target that any selection scheme
should aim at.

\section{Implementations}
\label{sec:algs}

The optimal weight~\eqref{eq:xistar} takes two inputs. The training length~$\ell$
is known. The mismatch~$D$ is not: it is a property of the pair of
sources, and neither source is observed. What the coder holds is the
training data and, as coding proceeds, a growing prefix of the message.
This section gives three ways to supply the missing number. Two facts do most of the work: the flatness
of the optimum, quantified above, and free decoder synchronization.
The decoder holds the training data as common side information and
reconstructs the message symbol by symbol, so any weight computed from
$(x_T,\,x_{S,1}\cdots x_{S,t-1})$ is reproducible there and costs no
side information.

The shared building block is a divergence estimate. For two independent
samples of sizes~$N$ and~$M$ with KT-smoothed empirical distributions
$\tilde P(x)=(k_x+\frac12)/(N+\frac{m}{2})$ and
$\tilde Q(x)=(k'_x+\frac12)/(M+\frac{m}{2})$, where~$k_x$ and~$k'_x$
are the respective counts,
\begin{equation}\label{eq:debias}
\hat D \;=\; \Bigl[\, D(\tilde P\,\|\,\tilde Q)
 \;-\;\frac{d}{2}\Bigl(\frac{1}{N}+\frac{1}{M}\Bigr)\Bigr]_+ .
\end{equation}
The plug-in divergence between independent empiricals overshoots, to
second order, by exactly the Wilks excess of the two fits, and~\eqref{eq:debias} removes it. The
smoothing keeps the estimate finite when one sample contains symbols
the other lacks.

\subsection{Offline: the Database Spread}
\label{sec:offline}

\begin{figure}[t]
\centering
\begin{tikzpicture}[
  font=\small,
  box/.style={draw, rounded corners=1pt, minimum height=8mm,
              align=center, inner sep=3pt},
  arr/.style={-{Stealth[length=2.2mm]}}]
\node[font=\small\itshape] at (2.0,4.55) {design time};
\node[font=\small\itshape] at (12.4,4.55) {coding time};
\draw[dashed] (6.2,-0.1) -- (6.2,4.8);
\node[box] (db) at (2.0,3.5) {training set\\$u_1,\dots,u_R$};
\node[box] (pair) at (2.0,2.05) {pairwise~$\hat D_{ij}$\\by~\eqref{eq:debias}};
\node[box] (xio) at (2.0,0.6) {$\xi^*=\frac{d}{2\ell\hat D+d}$\ \eqref{eq:xistar}};
\node[box, thick] (kt) at (9.4,3.5) {weighted KT\\\eqref{eq:kt}};
\node[box] (ac) at (12.4,3.5) {arithmetic\\coder};
\node[inner sep=0] (msg) at (9.4,4.55) {message $x_{S,1}\cdots x_{S,n}$};
\draw[arr] (db) -- (pair);
\draw[arr] (pair) -- (xio) node[midway, right, font=\scriptsize] {average~$\hat D$};
\draw[arr] (db.east) -- (kt.west)
   node[pos=0.42, below, font=\scriptsize] {counts~$k_x(x_T)$, length~$\ell$};
\draw[arr] (xio.east) -| (kt.south)
   node[pos=0.22, above, font=\scriptsize] {$\xi^*$};
\draw[arr] (msg) -- (kt);
\draw[arr] (kt) -- (ac);
\draw[arr] (ac.east) -- ++(1.0,0) node[right, inner sep=1pt] {bits};
\end{tikzpicture}
\caption{The offline selector: the spread of the training set fixes
the weight at design time.}
\label{fig:offline}
\end{figure}

A single training sequence carries no information about~$D$: the
mismatch is a property of the pair of sources, and only one left a
trace. What breaks the impasse offline is structure that training
databases usually have: a collection of sequences $u_1,\dots,u_R$ of
one nominal type. Fig.~\ref{fig:offline} shows the scheme. The
pairwise divergences \eqref{eq:debias} between the sequences estimate
the spread of the population, a message from the same population sees
a mismatch of that order, and the average sets the weight, once and
at design time. Coding then runs \eqref{eq:kt} with the concatenated
collection as~$x_T$. If the estimate clamps to zero,
\eqref{eq:xistar} returns~$\xi^*=1$, the correct action for an
indistinguishable population. When past messages are available, the
estimate is direct: \eqref{eq:debias} against held-out messages.

\subsection{Plug-in: a Feedback Loop}
\label{sec:plugin}

\begin{figure}[t]
\centering
\begin{tikzpicture}[
  font=\small,
  box/.style={draw, rounded corners=1pt, minimum height=8mm,
              align=center, inner sep=3pt},
  arr/.style={-{Stealth[length=2.2mm]}},
  farr/.style={-{Stealth[length=2.2mm]}, dashed}]
\node[box] (tr) at (1.6,3.6) {training\\$x_T$};
\node[box, thick] (kt) at (5.7,3.6) {weighted KT \eqref{eq:kt}\\weight~$\xi_t$};
\node[box] (ac) at (9.7,3.6) {arithmetic\\coder};
\node[inner sep=0] (msg) at (5.7,4.85) {message $x_{S,1}\cdots x_{S,n}$};
\draw[arr] (msg) -- (kt);
\draw[arr] (tr) -- (kt);
\draw[arr] (kt) -- (ac);
\draw[arr] (ac.east) -- ++(1.1,0) node[right, inner sep=1pt] {bits};
\node[box] (pre) at (9.7,1.6) {prefix empirical\\$x_{S,1}\cdots x_{S,t}$};
\node[box] (dh) at (5.7,1.6) {$\hat D_t$ by~\eqref{eq:debias},\\$(N,M)=(\ell,t)$};
\node[box] (xi) at (1.6,1.6) {$\xi_t=\frac{d}{2\ell\hat D_t+d}$};
\draw[farr] (ac) -- (pre) node[midway, right, font=\scriptsize] {decoded};
\draw[farr] (pre) -- (dh);
\draw[farr] (dh) -- (xi);
\draw[farr] (xi.north) -- ++(0,0.55) -| ([xshift=-12mm]kt.south);
\draw[-{Stealth[length=1.8mm]}] (1.2,0.15) -- (11.9,0.15);
\node[font=\scriptsize, below] at (11.6,0.1) {$t$};
\foreach \x/\t in {1.7/1, 2.0/2, 2.6/4, 3.8/8, 6.2/16, 11.0/32}
  {\draw (\x,0.05) -- (\x,0.25);
   \node[font=\scriptsize, below] at (\x,0.05) {\t};}
\node[font=\scriptsize] at (8.6,0.5) {update times (doubling)};
\end{tikzpicture}
\caption{The plug-in selector: the training-to-prefix divergence is
fed back into the weight at doubling times (dashed loop).}
\label{fig:plugin}
\end{figure}

When no population structure is available, the message itself is the
only witness to the mismatch, and the coder estimates~$D$ on the fly.
Fig.~\ref{fig:plugin} shows the loop: at prefix lengths
$t=1,2,4,8,\dots$, recompute~$\hat D_t$ by~\eqref{eq:debias} between
the training and prefix empiricals and reset the weight by~\eqref{eq:xistar}, starting from~$\xi=1$. Logarithmically spaced
updates are adequate, matching the rate at which the estimate itself
changes: \eqref{eq:debias} improves by a constant factor only per
doubling of~$t$, so finer updates would recompute the same number
within its own noise. By flatness, the weight held between updates,
at most one doubling stale, costs a bounded number of nats per
interval. The decoder reproduces the schedule, and
Theorem~\ref{thm:persymbol} supplies the target:
a constant, so there is no schedule left to tune.

The apparent circularity, that the weight matters early while the
estimate becomes accurate late, resolves because the two timescales
coincide: the sensitivity of the redundancy to the weight is
concentrated on the first~$O(\ell^*_\eff)$ message symbols, and
distinguishing two sources at divergence~$D$ takes on the order of
$\ell_\infty=d/(2D)$ symbols, the scale of Stein's lemma. The trust
cap of~\eqref{eq:harmonic} is also the \emph{detection horizon}, so
the price of initializing at full weight is bounded, of the order of~$d$ nats total uniformly in~$\ell$ and~$D$, and visible only on
messages shorter than a few multiples of~$\ell_\infty$.

\subsection{Mixture: Twice-Universal over the Weight}
\label{sec:mixture}

\begin{figure}[t]
\centering
\begin{tikzpicture}[
  font=\small,
  box/.style={draw, rounded corners=1pt, minimum height=7mm,
              align=center, inner sep=3pt},
  arr/.style={-{Stealth[length=2.2mm]}},
  farr/.style={-{Stealth[length=2.2mm]}, dashed}]
\node[box] (tr) at (1.7,2.85) {training\\$x_T$};
\node[inner sep=0] (msg) at (3.7,5.4) {message $x_{S,1}\cdots x_{S,n}$};
\node[box] (a1) at (6.5,4.6) {KT, $\xi=1$};
\node[box] (a2) at (6.5,3.7) {KT, $\xi=\tfrac12$};
\node[box] (a3) at (6.5,2.8) {KT, $\xi=\tfrac14$};
\node       (ad) at (6.5,2.05) {$\vdots$};
\node[box] (a0) at (6.5,1.25) {KT, $\xi=0$};
\node[draw, circle, inner sep=2pt] (sum) at (9.7,2.9) {$\Sigma$};
\node[box] (ac) at (12.5,2.9) {arithmetic\\coder};
\foreach \a in {a1,a2,a3}
  {\draw[arr] (tr.east) -- ([yshift=-1.8mm]\a.west);}
\node[font=\scriptsize, below] at (3.4,2.28) {training counts~$k_x(x_T)$};
\foreach \a in {a1,a2,a3,a0}
  {\draw[arr] (msg.south) -- ([yshift=1.8mm]\a.west);}
\foreach \a in {a1,a2,a3,a0} {\draw[arr] (\a.east) -- (sum);}
\draw[arr] (sum) -- (ac);
\draw[arr] (ac.east) -- ++(1.0,0) node[right, inner sep=1pt] {bits};
\node[font=\scriptsize, align=center] at (9.7,1.0)
  {posterior weights~$w_j$,\\updated from decoded symbols};
\end{tikzpicture}
\caption{The mixture selector: a bank of weighted KT codes over a
grid of weights, combined by posterior weighting.}
\label{fig:mixture}
\end{figure}

Estimation can be avoided altogether. Fix the geometric grid~$\xi_j=2^{-j}$, $j=0,\dots,K$, $K=\lceil\log_2\ell\rceil$, plus the
cold-start arm~$\xi=0$, and encode with the uniform mixture
$Q_{\mathrm{mix}}=\frac{1}{K+2}\sum_j Q_{\xi_j}$, computed as in
Fig.~\ref{fig:mixture}: the next-symbol probability is the
posterior-weighted average of the per-arm conditionals \eqref{eq:kt},
and the posterior updates from decoded symbols alone, so no arm is
ever selected or signaled. Pointwise,
$-\ln Q_{\mathrm{mix}} \le -\ln Q_{\xi_j}+\ln(K{+}2)$ for every
message and arm, so
\begin{equation}\label{eq:mixbound}
R^{\mathrm{mix}}_n \;\le\; R_n(\xi^*) \;+\; c_m \;+\; \ln(K{+}2),
\end{equation}
uniformly in~$D$, where~$c_m$ is the flatness cost of the nearest
grid arm, the factor-$\sqrt2$ instance of the flatness constant of
Section~\ref{sec:law}, and the premium~$\ln(K{+}2)$, doubly
logarithmic in~$\ell$, is paid once per message. The code is
\emph{twice universal} in the sense of Ryabko~\cite{Ryabko1984}:
within each arm, over the source; across arms, over the weight. The
construction parallels context-tree weighting~\cite{WillemsShtarkovTjalkens1995}, a sequential mixture of KT
estimators over model structures. Here the mixture runs over evidence
weights instead, and the two compose, since \eqref{eq:kt} is a
drop-in replacement for the KT estimator at each node of a context
tree. The boundary arms are the exact codes of the limiting regimes:
$\xi=0$ optimal when borrowing even one symbol is too much, $j=0$
optimal at~$D=0$.

\subsection{Complexity of the Proposed Approaches}
\label{sec:complexity}

The baseline is the coder itself. The weight enters \eqref{eq:kt}
only through the initial pseudo-counts, $\xi k_x(x_T)+\frac12$ in
place of~$\frac12$, so after this one-time initialization the
weighted code runs exactly as the classical KT coder: $O(m)$ work per
symbol over an~$m$-entry count table, hence~$O(mn)$ per message. Against this baseline the three selectors separate cleanly.
The offline weight adds nothing at coding time: it is one number,
computed from~$\binom{R}{2}$ evaluations of the~$O(m)$ estimator
\eqref{eq:debias} before any message is seen. The plug-in adds~$O(\log n)$ evaluations of~\eqref{eq:debias} during coding, an~$O(m\log n)$ overhead that vanishes against the~$O(mn)$ baseline.
The mixture is the only selector that multiplies: it runs~$O(\log\ell)$ arms per symbol, so coding costs~$O(\log\ell)$ times
the baseline. The guarantee~\eqref{eq:mixbound} is what the factor
buys.

\section{Experimental Results}
\label{sec:exp}

The three selectors are evaluated on the ten English-text files of the
Calgary and Canterbury corpora~\cite{BellClearyWitten1990,ArnoldBell1997}: \texttt{bib},
\texttt{book1}, \texttt{book2}, \texttt{news}, \texttt{paper1},
\texttt{paper2} from the former, and \texttt{alice29},
\texttt{asyoulik}, \texttt{lcet10}, \texttt{plrabn12} from the
latter. The files span distinct genres: a bibliography, two novels, news, two
technical papers, children's fiction, a play, proceedings, and verse.
The model is the memoryless source over the byte alphabet, $m=256$,
$d=255$, a deliberate mis-specification: real text is not
i.i.d., but all coders share the same order-$0$ model, so the
comparison isolates the weight-selection rules. The protocol is
leave-one-out: each file in turn is the message, the concatenation of
the remaining nine ($\ell\approx 2.4$--$3.1$ million symbols) the
training sequence. Six conditions are compared:
\begin{itemize}\setlength{\itemsep}{1pt}
\item \textbf{$\xi=0$}: the classical KT code, the cold start;
training data discarded. The sensible default of universal coding,
and the baseline.
\item \textbf{$\xi=1$}: the sample-based code; training data pooled
with the message at full weight, as if matched.
\item \textbf{$\xi^*$ offline}: the weight~\eqref{eq:xistar} from
the average pairwise divergence between the nine training files
(Section~\ref{sec:offline}); one number, fixed before coding.
\item \textbf{plug-in}: the loop of Section~\ref{sec:plugin},
initialized at~$\xi=1$.
\item \textbf{mixture}: Section~\ref{sec:mixture}, $K+2=24$ arms.
\item \textbf{oracle}: the best \emph{fixed} weight for the given
file, found in hindsight by encoding the file at every weight on the
grid~$2^{-j/4}$, $j=0,\dots,88$, and keeping the minimum. The oracle
knows the answer it is asked to find: it sees the realized code
lengths of the very file being scored, so it is not an implementable
coder. It is the reference that any fixed-weight scheme can at best
attain, and the gap between a selector and the oracle is the price of
not knowing~$D$.
\end{itemize}
This section reports in bits: all code lengths are ideal
arithmetic-coding lengths, $-\log_2 Q$, in bits per symbol, and the
divergence estimates, computed in nats and consumed by~\eqref{eq:xistar} in nats, are tabulated in bits per symbol
($1$ nat~$=\log_2 e$ bits). The benchmark is reproduced by the
open-source package~\cite{WktCode}.

Table~\ref{tab:weights} reports the per-file empirical order-$0$
entropies~$\hat H$, the offline divergence estimates~$\hat D$, and
the selected weights; for the plug-in, the weight reported is its
final value, set at the last doubling update.
Table~\ref{tab:codelengths} reports the per-file redundancies, ideal
code length minus~$\hat H$, with their length-weighted averages and,
in the last row, the relative change of the average against the cold
start, in percent.

\begin{table}[t]
\centering
\caption{Empirical entropy, divergence, and selected weights ($\times 10^{-3}$).}
\label{tab:weights}
\footnotesize
\begin{tabular}{lrrrrrr}
\toprule
file & $n$ & $\hat H$ & $\hat D$ & $\xi^*$ offline & $\xi$ plug-in & $\xi$ oracle \\
\midrule
\texttt{bib}      & 111\,261 & 5.201 & 0.284 & 0.21 & 0.14 & 0.12 \\
\texttt{book1}    & 768\,771 & 4.527 & 0.414 & 0.19 & 0.34 & 0.29 \\
\texttt{book2}    & 610\,856 & 4.793 & 0.425 & 0.17 & 0.74 & 0.69 \\
\texttt{news}     & 377\,109 & 5.190 & 0.400 & 0.16 & 0.52 & 0.49 \\
\texttt{paper1}   &  53\,161 & 4.983 & 0.387 & 0.15 & 0.68 & 0.49 \\
\texttt{paper2}   &  82\,199 & 4.601 & 0.434 & 0.14 & 0.52 & 0.82 \\
\texttt{alice29}  & 148\,481 & 4.513 & 0.400 & 0.15 & 0.26 & 0.24 \\
\texttt{asyoulik} & 125\,179 & 4.808 & 0.379 & 0.16 & 0.23 & 0.21 \\
\texttt{lcet10}   & 419\,235 & 4.623 & 0.412 & 0.16 & 0.36 & 0.41 \\
\texttt{plrabn12} & 471\,162 & 4.477 & 0.405 & 0.17 & 0.26 & 0.21 \\
\bottomrule
\end{tabular}
\end{table}

\begin{table}[t]
\centering
\caption{Redundancies, bits per symbol.}
\label{tab:codelengths}
\footnotesize
\setlength{\tabcolsep}{2.6pt}
\begin{tabular}{lrrrrrr}
\toprule
file & $\xi=0$ & $\xi=1$ & $\xi^*$ offline & $\xi$ plug-in & $\xi$ mixture & $\xi$ oracle \\
\midrule
\texttt{bib}      & 0.0125 & 0.6030 & 0.0109 & 0.0118 & 0.0106 & 0.0106 \\
\texttt{book1}    & 0.0023 & 0.0687 & 0.0019 & 0.0019 & 0.0019 & 0.0019 \\
\texttt{book2}    & 0.0028 & 0.0683 & 0.0022 & 0.0020 & 0.0021 & 0.0020 \\
\texttt{news}     & 0.0043 & 0.1644 & 0.0034 & 0.0035 & 0.0032 & 0.0032 \\
\texttt{paper1}   & 0.0234 & 0.1578 & 0.0168 & 0.0173 & 0.0155 & 0.0154 \\
\texttt{paper2}   & 0.0162 & 0.0541 & 0.0119 & 0.0117 & 0.0103 & 0.0102 \\
\texttt{alice29}  & 0.0097 & 0.1055 & 0.0078 & 0.0084 & 0.0077 & 0.0077 \\
\texttt{asyoulik} & 0.0113 & 0.1779 & 0.0092 & 0.0105 & 0.0093 & 0.0092 \\
\texttt{lcet10}   & 0.0039 & 0.0815 & 0.0031 & 0.0033 & 0.0030 & 0.0030 \\
\texttt{plrabn12} & 0.0035 & 0.0884 & 0.0030 & 0.0029 & 0.0030 & 0.0030 \\
\midrule
average           & 0.00479 & 0.11054 & 0.00383 & 0.00392 & 0.00370 & 0.00368 \\
$\Delta$ vs $\xi{=}0$, \% & $0$ & $+2210$ & $-19.9$ & $-18.1$ & $-22.7$ & $-23.2$ \\
\bottomrule
\end{tabular}
\end{table}

The core result is in the last two rows of
Table~\ref{tab:codelengths}. Relative to the better classical choice
on this corpus, the cold start~$\xi=0$, the mixture removes~$22.7\%$
of the redundancy, within~$0.0003$ bits per symbol of the
oracle's~$23.2\%$ ceiling. The offline weight removes~$19.9\%$, the
plug-in~$18.1\%$. Full weight is not a contender: it
multiplies the cold start's redundancy twenty-three-fold, the linear
penalty~$\ell\hat D\sim 10^6$ bits dominating. The weighted code
beats both classical choices on every file, so neither endpoint is
ever the right answer here.

Behind the averages, the tables confirm the analysis. The divergence
estimates fall in~$0.28$ to~$0.43$ bits per symbol, five to nine
percent of the files' entropies, and the optimal effective lengths sit
at~$420$ to~$650$ characters against training sequences of three
million: the harmonic cap in action, with~$99.98\%$ of the training
data discounted away and the code still better than the cold start.
\texttt{book2} shows the division of labor: its realized mismatch
($0.69$ in the oracle column of Table~\ref{tab:weights}) is four
times the population average ($0.17$), and the plug-in tracks it
to~$0.74$ where the offline weight cannot: the loop measures the
message, the spread measures the population.

\section{Conclusions}
\label{sec:concl}

Training data in adaptive coding generally do not match the source
being encoded. They have traditionally been controlled by selection:
which data, and how much. This paper made the \emph{weight} at which
they enter the code the control variable: a KT estimator whose
training counts are scaled by~$\xi\in[0,1]$. The choice admits a
closed-form answer. For a training sequence of length~$\ell$ at
divergence~$D$ nats per symbol from the message source, the optimal
weight is $\xi^*=d/(2\ell D+d)$, $d=m-1$. The effective training
length $\xi^*\ell$ obeys the harmonic law~\eqref{eq:harmonic}. Any mismatch caps the usable training
information at~$d/(2D)$ symbols, however much was collected. The law
converts a tuning problem into a measurement problem, and three
selectors solve it: offline, from the spread of the training set; by
a plug-in loop, from the decoded prefix; and by a twice-universal
mixture, with no estimation at all. On standard corpora the resulting
codes beat both classical endpoints, $\xi=0$ and~$\xi=1$, on every
file, removing up to~$22.7\%$ of the redundancy of the classical KT
code, the $\xi=0$ endpoint.

The results extend directly to the conditional coders of practice. A
context-based coder
over~$K$ contexts runs one such estimator per context, the
model and the coding distribution factorize across contexts, and the
redundancy is the sum of~$K$ single-context redundancies, with
training length~$\ell_s$, message length~$n_s$, and divergence~$D_s$
at context~$s$. The
aggregate reproduces \eqref{eq:main} with $\ell D=\sum_s \ell_s D_s$,
plus one new $\xi$-free term,
$\frac{m-1}{2}\sum_s\ln\frac{n_s/n}{\ell_s/\ell}$, pricing the
difference between the training and message context frequencies.
Since the new term does not involve~$\xi$, the optimal
weight~\eqref{eq:xistar} and the harmonic law~\eqref{eq:harmonic}
apply per context and in aggregate unchanged.

This covers the primed coders of the introduction. In video coders of
the CABAC family~\cite{Marpe2003}, per-context weights, with~$D_s$
estimated at design time from the corpus partitioned by clip, can
replace a few hand-tuned initialization strengths by a computed
per-context value. In segmented and random-access formats~\cite{Zstd}, where the
cold-start overhead $\frac{d}{2}\ln n$ dominates, the weight can
convert a shared dictionary from a binary decision into a dial. With
a sliding window over a stream's own past in the role of the training
sequence, the same construction can cover sources that drift. And for
a pretrained model in the role of the training
source~\cite{Deletang2024}, the weight can price what the model is
worth on the data actually encoded.

\end{document}